\documentclass[a4paper,USenglish,cleveref,thm-restate]{lipics-v2021}
\pdfoutput=1

\usepackage{style}
\usepackage{shortcuts}

\title{A Simple Algorithm for Trimmed Multipoint Evaluation}

\author{Nick Fischer}{INSAIT, Sofia University ``St.\ Kliment Ohridski'', Bulgaria}{nick.fischer@insait.ai}{https://orcid.org/0009-0001-0909-3296}{Partially funded by the Ministry of Education and Science of Bulgaria's support for INSAIT as part of the Bulgarian National Roadmap for Research Infrastructure. Part of this work was done while the author was affiliated with the Weizmann Institute of Science.}
\author{Melvin Kallmayer}{Goethe University Frankfurt, Germany}{mel.kallmayer@gmail.com}{https://orcid.org/0009-0001-2466-8246}{}
\author{Leo Wennmann}{University of Southern Denmark, Odense, Denmark}{wennmann@imada.sdu.dk}{https://orcid.org/0009-0001-3346-6494}{Supported by Dutch Research Council (NWO) project "The Twilight Zone of Efficiency: Optimality of Quasi-Polynomial Time Algorithms" [grant number OCEN.W.21.268].}
\authorrunning{N.~Fischer, M.~Kallmayer and L.~Wennmann} 
\Copyright{Nick Fischer, Melvin Kallmayer, Leo Wennmann}

\keywords{Algebraic Algorithms, Multipoint Evaluation, Interpolation, LU Decomposition}

\begin{CCSXML}
<ccs2012>
   <concept>
       <concept_id>10003752.10003809</concept_id>
       <concept_desc>Theory of computation~Design and analysis of algorithms</concept_desc>
       <concept_significance>500</concept_significance>
       </concept>
 </ccs2012>
\end{CCSXML}
\ccsdesc[500]{Theory of computation~Design and analysis of algorithms}

\EventEditors{Anne Benoit, Haim Kaplan, Sebastian Wild, and Grzegorz Herman}
\EventNoEds{4}
\EventLongTitle{33rd Annual European Symposium on Algorithms (ESA 2025)}
\EventShortTitle{ESA 2025}
\EventAcronym{ESA}
\EventYear{2025}
\EventDate{September 15--17, 2025}
\EventLocation{Warsaw, Poland}
\EventLogo{}
\SeriesVolume{351}
\ArticleNo{87}

\allowdisplaybreaks
\nolinenumbers

\begin{document}

\maketitle

\begin{abstract}
Evaluating a polynomial on a set of points is a fundamental task in computer algebra. In this work, we revisit a particular variant called \emph{trimmed} multipoint evaluation: given an $n$-variate polynomial with bounded individual degree $d$ and total degree $D$, the goal is to evaluate it on a natural class of input points. This problem arises as a key subroutine in recent algorithmic results [Dinur; SODA '21], [Dell, Haak, Kallmayer, Wennmann; SODA '25]. It is known that trimmed multipoint evaluation can be solved in near-linear time [van der Hoeven, Schost; AAECC '13] by a clever yet somewhat involved algorithm. We give a \emph{simple} recursive algorithm that avoids heavy computer-algebraic machinery, and can be readily understood by researchers without specialized background.
\end{abstract}
\section{Introduction} \label{sec:introduction}
One of the most fundamental problems in computer algebra is to efficiently evaluate a polynomial $P$ on some set of points, known as the \emph{multipoint evaluation} problem. Besides its importance as one of the most basic algebraic primitives, this problem finds many further applications in computer algebra (such as modular composition and polynomial factorization) and in algorithm design beyond (in diverse fields such as computational geometry, coding theory and cryptography). The inverse task, to \emph{interpolate} a polynomial from a given set of evaluations, is an equally important primitive.

For univariate polynomials, a textbook algorithm~\cite{BorodinM74,vonzurGathenG13} solves the multipoint evaluation problem in near-linear time. This algorithm generalizes to multivariate polynomials~\cite{Pan94} (via a simple divide-and-conquer method sometimes referred to as \emph{Yates' algorithm}~\cite{Yates37}), however, only in the restricted setting where the evaluation points form a cartesian \emph{grid}. Specifically, Yates' algorithm evaluates an $n$-variate polynomials with individual degree $d$ on all points of a grid
\begin{equation*}
    Z = \set{z_{1, 0}, \dots, z_{1, d}} \times \dots \times \set{z_{n, 0}, \dots, z_{n, d}}
\end{equation*}
in near-optimal time\footnote{Here and throughout, we write $O^*(\cdot)$ to omit polynomial factors in $n$ and $d$.} $O^*((d+1)^n)$. Lifting this restriction to grid points has been the focus of a long and active line of research~\cite{NuskenZ04,Umans08,KedlayaU11,HoevenS13,BjorklundKW19,HoevenL21a,BhargavaG0M22,BhargavaGG0U22} which, following breakthroughs by Umans~\cite{Umans08} and Kedlaya and Umans~\cite{KedlayaU11}, only recently culminated in an algorithm with almost-optimal running time $(d + 1)^{(1+o(1))n} \poly(n, d, \log |\mathbb F|)$, for all finite fields $\mathbb F$ and for $(d + 1)^n$ arbitrary evaluation points, due to Bhargava, Ghosh, Guo, Kumar and Umans~\cite{BhargavaGG0U22}. This fully settles the multipoint evaluation problem for \emph{dense} polynomials (over finite fields), but leaves open whether almost-linear time can also be achieved for (some classes of) \emph{sparse} polynomials and evaluation points.

\medskip
In this paper we focus on one natural such class, called \emph{trimmed} multipoint evaluation, with important applications in the design of exact and parameterized algorithms. Trimmed multipoint evaluation can be solved in near-linear time by an algorithm due to van der Hoeven and Schost~\cite{HoevenS13}. Our contribution is that we make this result accessible to modern algorithm design (beyond computer algebra) by distilling a particularly \emph{simple} recursive algorithm.

\paragraph*{Trimmed Multipoint Evaluation}
In the trimmed multipoint evaluation problem we focus on the class of $n$-variate polynomials~$P$ with individual degree $d$ \emph{and total degree\footnote{Recall that the \emph{individual} degree $d$ of a polynomial is the largest exponent of a variable in a monomial, whereas the \emph{total} degree $D$ is the largest sum of exponents in a monomial. E.g., $X_1^2 X_2$ has individual degree $d = 2$ and total degree $D = 3$.} $D$}. This is a very natural class of polynomials which, for $D \ll n d$, is exponentially sparser than polynomials with just an individual degree bound. As evaluation points we consider triangular subsets of grids $Z$ defined by
\begin{equation*}
    \set{(z_{1, \ell_1}, \dots, z_{n, \ell_n}) : \ell \in \set{0, \dots, d}^n,\, \ell_1 + \dots + \ell_n \leq D} \subseteq Z,
\end{equation*}
to which we will informally refer as \emph{trimmed} grids. This is arguably the most naturally matching class of evaluation points. To see this, first observe that the number of relevant (i.e., possibly nonzero) coefficients of $P$ equals exactly the number of grid points. We denote this number by $\ebc{n}{\leq D}{d}$ (which can be seen as an appropriate generalization of a binomial coefficient called an \emph{extended} binomial coefficient). More importantly, it turns out that the polynomial $P$ is \emph{uniquely} determined by the evaluations on an appropriate trimmed grid, i.e., we can interpolate $P$ given only these evaluations. Trimmed grids are, in a sense, the only sets of evaluation points satisfying this property; see~\cite{Sauer04}.

\paragraph*{Applications}
Besides being a natural problem in its own right, our interest in trimmed multipoint evaluation stems mainly from its applications in the context of exponential-time algorithms, most notably in a line of research on solving systems of polynomial equations~\cite{LokshtanovPTWY17,BjorklundK019,Dinur21,DellHKW25}. In this problem the input consists of $n$-variate polynomials~$P_1, \dots, P_m$ over some finite field $\mathbb F_q$ with (total) degree~$\Delta$, and the task is to test if all polynomials simultaneously vanish at some point~$x \in \mathbb F_q^n$, i.e.,~\makebox{$P_1(x) = \dots = P_m(x) = 0$}. The initial breakthrough due to Lokshtanov, Paturi, Tamaki, Williams and Yu~\cite{LokshtanovPTWY17} established that this problem can be solved exponentially faster than brute-force, in time $(q - \epsilon)^n$ for some $\epsilon > 0$, whenever $q$ and $\Delta$ are constant. This inspired several follow-up papers aiming to optimize the precise exponential running time~\cite{BjorklundK019,Dinur21,DellHKW25}. Trimmed multipoint evaluation shows up as a critical subroutine in the algorithms due to Dinur~\cite{Dinur21} (for multilinear polynomials, $d = q-1 = 1$) and due to Dell, Haak, Kallmayer and Wennmann~\cite{DellHKW25} (for general~\makebox{$d = q-1 \geq 1$}). Solving systems of polynomial equations in turn has many more applications on both the theoretical side---e.g., parity-counting directed Hamiltonian cycles~\cite{BjorklundH13}---and the practical side---e.g., the security of several so-called multivariate cryptosystems is based on the hardness of solving quadratic equations~\cite{Dinur21,Patarin96,KipnisPG99}. Especially for the latter it could be interesting to achieve simple, implementable algorithms.

In another closely related work, Björklund, Husfeldt, Kaski and Koivisto~\cite{BjorklundHKK10} considered trimmed\footnote{\setstretch{1.1}In fact, they consider a more general definition of ``trimmed'' allowing arbitrary downward-closed sets.} variants of the Zeta and Möbius transforms to develop fast algorithms for various exponential-time graph problems, such as computing the chromatic number for constant-degree-bounded graphs. The Zeta and Möbius transforms can be regarded as special cases of polynomial multipoint evaluation.\footnote{\setstretch{1.1}Indeed, recall that the zeta transform of a function $f : \set{0, 1}^n \to \mathbb F$ is defined as \smash{$(f \zeta)(X) = \sum_{Y \subseteq X} f(Y)$}, where we identify sets $X \subseteq [n]$ with their indicator vectors $X \in \set{0, 1}^n$. Consider the $n$-variate polynomial \smash{$P(x_1, \dots, x_n) = \sum_{Y \subseteq [n]} f(Y) \prod_{i \in Y} x_i$}, and observe that $(f \zeta)(X) = P(X)$. Thus, the zeta transform $f \zeta$ can be read off the evaluations of $P$ on the grid \smash{$\set{0, 1}^n$}.}

\paragraph*{State of the Art}
The state of the art for trimmed multipoint evaluation, as mentioned before, is a clever algorithm with near-linear running time $O^*(\ebc{n}{\leq D}{d})$ due to van der Hoeven and Schost~\cite{HoevenS13}. Their algorithm is based on the classical concept of Newton interpolation (see e.g.~\cite{BiniP94}), suitably tailored to the trimmed problem (see also~\cite{Werner80,Boor92,Sauer04,ChkifaCS14,DynF14} for some more references with a more mathematical point of view). The algorithm also offers two additional benefits: (1) It even solves a strictly more general multipoint evaluation problem on arbitrary downward-closed sets of relevant coefficients and grid points. (2) van der Hoeven and Schost have optimized the lower-order factors achieving an algebraic algorithm with $O(n N \log^2 N \log\log N)$ field operations where $N = \ebc{n}{\leq D}{d}$. The potentially remaining lower-order improvements even persist in the univariate setting. Thus, all in all, the trimmed multipoint evaluation problem has already been satisfyingly resolved.

The only downside is that van der Hoeven and Schost's algorithm is arguably somewhat intricate---both in the sense that it can be technically demanding to understand, particularly for researchers outside the computer algebra community, and in that it relies on several textbook algebraic primitives, such as efficient conversions between polynomial bases~\cite{BiniP94} and the use of truncated Fourier transforms as an implementation detail to achieve further improvements.

\paragraph*{Our Contribution}
Our focus here is \emph{not} to optimize the running time or generality of this state of the art. Instead, given the many exciting algorithmic applications our contribution is to make van der Hoeven and Schost's result for trimmed multipoint evaluation accessible to the algorithms community. We design a \emph{simple} recursive algorithm that is teachable to researchers without any background in computer algebra. Moreover, our algorithm does not rely on any black-box algebraic primitives other than Gaussian elimination. As in~\cite{HoevenS13}, we also obtain an equally simple algorithm for the interpolation problem. 

Our emphasis on simplicity clashes, however, with the additional benefits~(1) and~(2): (1)~It seems hard to obtain a recursive algorithm for the more general problem, and (2) optimizing lower-order factors would involve dealing with more details. Besides, for exponential-time algorithms we are typically anyway not bothered with keeping track of polynomial factors. For these reasons we have decided to stick to the simplest version, resulting in a pleasingly simple 8-line algorithm.

\paragraph*{Remark on Terminology}
The term ``trimmed'' multipoint evaluation is inspired by the related algorithms for Zeta and Möbius transforms in~\cite{BjorklundHKK10}. It is not standard in the mathematical literature, where one more often encounters terms like ``triangular subsets of tensor product grids''. We adopt the ``trimmed'' terminology here as this paper is primarily intended for the algorithms community.
\section{Preliminaries} \label{sec:preliminaries}
We write $[n] = \set{1, \dots, n}$.
Throughout, let $\mathbb F$ be a field and assume that we can evaluate field operations in unit time. 
For integers $n, k, d$ with $n \geq 0$ and $d \geq 1$ we define the \emph{extended binomial coefficient}
\begin{equation*}
    \ebc{n}{k}{d} = \abs{\set{ \ell \in \set{0, \dots, d}^n : \ell_1 + \dots + \ell_n = k}}.
\end{equation*}
That is, $\ebc{n}{k}{d}$ counts the number of multisubsets of $[n]$ with size $k$ and multiplicity at most~$d$, or equivalently, the number of monomials with total degree $k$ and individual degree at most~$d$ in an $n$-variate polynomial. In the same spirit we define the set
\begin{equation*}
    \ebc{[n]}{k}{d} = \set{ \ell \in \set{0, \dots, d}^n : \ell_1 + \dots + \ell_n = k}.
\end{equation*}
As a shorthand, we write $\ebc{n}{\le k}{d} = \sum_{i=0}^k \ebc{n}{i}{d}$,
and similarly define $\ebc{[n]}{\leq k}{d} = \bigcup_{i=0}^k \ebc{n}{i}{d}$. 
Further, we rely on the following generalization of Pascal's triangle; see~\cite{Bollinger93}.

\begin{lemma}[Extended Pascal Triangle] \label{lem:extended-pascal}

For $n, d \geq 1$ it holds that
\begin{equation*}
    \ebc{n}{k}{d} = \sum_{j=0}^d \ebc{n-1}{k-j}{d}.
\end{equation*}
\end{lemma}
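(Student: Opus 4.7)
The plan is to prove this by a direct combinatorial partition based on the value of the last coordinate. Recall that $\ebc{n}{k}{d}$ counts the tuples $\ell \in \{0,\dots,d\}^n$ with $\ell_1 + \dots + \ell_n = k$. I would partition this set according to the value $j := \ell_n \in \{0, \dots, d\}$, and observe that fixing $\ell_n = j$ leaves $(\ell_1, \dots, \ell_{n-1})$ ranging over exactly those tuples in $\{0,\dots,d\}^{n-1}$ summing to $k - j$. The number of such tuples is by definition $\ebc{n-1}{k-j}{d}$, so summing over $j$ from $0$ to $d$ yields the claimed identity.

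The steps in order are: (i) rewrite $\ebc{n}{k}{d}$ as the cardinality of the disjoint union $\bigsqcup_{j=0}^{d} \{\ell \in \{0,\dots,d\}^n : \ell_1+\dots+\ell_n = k,\, \ell_n = j\}$; (ii) apply the bijection $(\ell_1, \dots, \ell_{n-1}, j) \mapsto (\ell_1, \dots, \ell_{n-1})$ between the $j$-th part and $\ebc{[n-1]}{k-j}{d}$; (iii) sum the resulting cardinalities.

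The only mild subtlety is that the summand $\ebc{n-1}{k-j}{d}$ must vanish when $k - j < 0$; this is automatic from the definition, since the set $\{\ell \in \{0,\dots,d\}^{n-1} : \ell_1 + \dots + \ell_{n-1} = k-j\}$ is empty whenever $k - j$ is negative (as each $\ell_i \geq 0$). The same holds when $k - j > (n-1)d$. So the identity holds with no case distinction. I do not expect any serious obstacle; the proof is essentially a one-line partition argument, which is why the lemma is stated without fanfare.
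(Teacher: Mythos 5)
Your proof is correct: the partition of $\ebc{[n]}{k}{d}$ by the value of the last coordinate $\ell_n = j$, together with the observation that the summands with $k-j<0$ (or $k-j>(n-1)d$) vanish automatically, is exactly the standard argument for this identity. The paper itself gives no proof and simply cites the literature, so there is nothing further to compare; your one-line partition argument is precisely what that citation stands in for.
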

    
Throughout we refer to a set $Z = \set{z_{1, 0}, \dots, z_{1, d}} \times \dots \times \set{z_{n, 0}, \dots, z_{n, d}}$ of field elements~$z_{i, j}$ as a \emph{grid}. To conveniently refer to the grid points we regularly write $Z_\ell = (z_{1, \ell_1}, \dots, z_{n, \ell_n})$ for $\ell \in \set{0, \dots, d}^n$. In particular, the subset of grid points $Z_\ell$ where $\ell$ ranges over $\ebc{[n]}{\leq D}{d}$ constitutes exactly a trimmed grid as introduced before.
\section{Trimmed Multipoint Evaluation and Interpolation} 
\label{sec:algo}

In this section, we present a simple, algebraic algorithm for trimmed multipoint evaluation.
As it is usually the case, the same algorithmic approach yields a simple algorithm for trimmed interpolation as well.

\subsection{Key Ideas}
\label{subsec:key-ideas}

The problem of evaluating univariate polynomials can be defined via the Vandermonde matrix.
For $z_{0}, \dots , z_{d} \in \F$, define the (square) Vandermonde matrix $V \in \F^{(d+1) \times (d+1)}$ as
\begin{equation*}
    V =
    V(z_{0}, \dotsc , z_{d}) = 
    \begin{pmatrix}
        1 & z_{0} & z_{0}^2 & \cdots & z_{0}^d\\
        1 & z_{1} & z_{1}^2 & \cdots & z_{1}^d\\
        \vdots & \vdots & \vdots & \ddots & \vdots\\
        1 & z_{d} & z_{d}^2 & \cdots & z_{d}^d
    \end{pmatrix}.
\end{equation*}
Let $a \in \F^{d+1}$ be the coefficient vector of the univariate polynomial $P(X) = \sum_{i=0}^{d} a_iX^i$ of degree~$d$.
The evaluation of~$P$ at all points~$z_{j}$ with $j \in \set{0, \dotsc, d}$ can be expressed as the matrix-vector product $V \cdot a = y$ (indeed, each entry is of the form $y_j = \sum_{i=0}^{d} a_i z_{j}^i = P(z_{j})$).
Similarly, we can interpolate~$P$ from its evaluations~$P(z_{j})$ using the matrix-vector product~$V^{-1} \cdot y = V^{-1} \cdot V \cdot a = a$.
The generalization to multivariate polynomials is simple:
Let~$a \in \F^{(d+1)^n}$ be the coefficient vector of an $n$-variate polynomial~$P$ with individual degree~$d$ that we want to evaluate on all grid points in~$Z$.
Then the Kronecker product~$V(z_{1,0}, \dotsc , z_{1,d}) \otimes \cdots \otimes V(z_{n,0}, \dotsc , z_{n,d}) \cdot a = y$ yields the vector~$y \in \F^{(d+1)^n}$ of all grid point evaluations.
The classical algorithm of Yates \cite{Yates37} allows to compute these evaluations recursively.
Notably, this computation does \emph{not} account for the total degree~$D$ and automatically results in a running time of $(d+1)^n$ for both multipoint evaluation and interpolation---even in our setting where the total degree~$D$ is much smaller than its maximal value~$nd$.
Consequently, we need more insights to tailor this approach to the trimmed requirements.

\subparagraph*{Idea 1: Recursion Scheme.}
Our goal is to achieve a running time that is linear in~$\ebc{n}{\le D}{d}$, i.e., in the number of grid points on which we want to evaluate a multivariate polynomial~$P$.
As a baseline, we start with a short explanation of Yates' algorithm: 
write $P(X_1, \dots , X_{n}) = \sum_{i= 0}^{d} P_i(X_1, \dots , X_{n-1}) \cdot X_{n}^i$ to obtain $d+1$ many $(n - 1)$-variate polynomials~$P_i$ of degree~$D-i$, where each~$P_i$ can be seen as a coefficient of a univariate polynomial in~$X_n$.
By making $d+1$ many recursive calls of size~$\ebc{n-1}{\le D}{d}$, we compute the evaluations~$P_i(Z_{\ell'})$ for all~$\ell' \in \ebc{[n-1]}{\le D}{d}$.
We obtain all~$P(Z_\ell)$ for~$\ell \in \ebc{[n]}{\le D}{d}$ by evaluating the univariate polynomials~$\sum_{i=0}^{d} P_i(Z_{\ell'}) \cdot X_n^i$ at all grid points~$z_{n, i}$.
However, this does \emph{not} exploit the degrees~$D-i$ of the polynomials~$P_i$ in the recursive calls, and hence results in the running time of~$(d+1)^n$.

Consider the following identity of the extended binomial coefficient that can be derived from \cref{lem:extended-pascal}
\begin{equation*}
    \label{eq:extended-pascal-2}
    \ebc{n}{\le D}{d} = \ebc{n-1}{\le D}{d} + \ebc{n-1}{\le D-1}{d} + \cdots + \ebc{n-1}{\le D-d}{d}.
\end{equation*}
In light of this identity, in order to achieve a running time of \TEruntime{} we aim to design an algorithm with $d+1$ recursive calls on $n-1$ variables with total degrees~$D, D-1, \dotsc, D-d$.

\begin{algorithm}[t!]
    \caption{\label{algo:TrimmedEval}%
    \TrimmedEval$(P)$}
    \SetKwInOut{Input}{Input}
    \SetKwInOut{Output}{Output}
    \DontPrintSemicolon
    
    \Input{$n$-variate polynomial $P$ of individual degree $d$ and total degree $D$}
    \Output{Evaluations $P(Z_\ell)$ for all $\ell \in \ebc{[n]}{\leq D}{d}$}

    \medskip

    \lIf{$n = 0$}{\Return the constant $P$.}
    
    Write $P(X_1, \dots , X_{n}) = \sum_{i=0}^{d} P_i(X_1, \dots , X_{n-1}) \cdot X_{n}^i$.
    
    Compute the \textsc{LU} decomposition $V(z_{n,0}, \dots , z_{n,d})=L \cdot U$. \label{algo:TrimmedEval:LU}

    Compute the vector of polynomials
    \begin{align*}
        \hspace{3cm}
        \begin{pmatrix} 
            Q_0 \\
            \vdots \\
            Q_d \\
        \end{pmatrix}
        =
        \begin{pmatrix}
            U_{0, 0} & \cdots  & U_{0,d}\\
                        & \ddots & \vdots \\
                &  & U_{d,d} \\
        \end{pmatrix}
        \cdot
        \begin{pmatrix} 
            P_0 \\
            \vdots \\
            P_d \\
        \end{pmatrix}.
    \end{align*}   \label{algo:TrimmedEval:polynomials:Q}
    
    \For{$j = 0,\dots, d$ \label{algo:TrimmedEval:recursive:calls1}}{
        Recursively call \TrimmedEval$(Q_j)$ to evaluate $Q_j(Z_{\ell'})$ for $\ell' \in \ebc{[n-1]}{\le D-j}{d}$.
        \label{algo:TrimmedEval:recursive:calls2}
    }
    
    \For{$\ell' \in \ebc{[n-1]}{\le D}{d}$\label{algo:TrimmedEval:evaluations:1}}{
        Let $k = \min\{d, D - \sum_{j=1}^{n-1} \ell'_j \}$ and compute all evaluations
        \begin{align*}
            \hspace{1.2cm}
            \begin{pmatrix} 
                P\parens*{Z_{(\ell', 0)}} \\
                \vdots \\
                P\parens*{Z_{(\ell', k)}} \\
            \end{pmatrix}
            =
            \begin{pmatrix}
                L_{0,0}   &        & \\
                \vdots    & \ddots & \\
                L_{k, 0} & \cdots & L_{k,k} \\
            \end{pmatrix}
            \cdot
            \begin{pmatrix} 
                Q_0(Z_{\ell'}) \\
                \vdots \\
                Q_{k}(Z_{\ell'}) \\
            \end{pmatrix}.
        \end{align*}  \label{algo:TrimmedEval:evaluations:2}
    }
\end{algorithm}

\subparagraph*{Idea 2: LU Decomposition.}
Our goal is an interleaving of the~$P_i$'s into polynomials~$Q_j$ satisfying two properties:
(1) All evaluations~$Q_j(Z_{\ell'})$ can be recursively computed in a call of size~$\ebc{n-1}{\le D-j}{d}$; in particular, $Q_j$ must have degree~$D-j$.
(2) Simultaneously, we need to recover all evaluations~$P(Z_\ell)$ from the recursively computed evaluations~$Q_j(Z_{\ell'})$ for~$\ell' \in \ebc{[n-1]}{\le D-j}{d}$.
Due to the degree restriction for the polynomials~$Q_j$ in (1) and evaluations~$Q_j(Z_{\ell'})$ in (2), this coincides with a matrix factorization of~$V = L \cdot U$ where $U$ is upper-triangular and $L$ lower-triangular, i.e., a \textsc{LU} decomposition.

In more detail: Why is the \textsc{LU} decomposition useful in keeping the degrees of the polynomials~$Q_j$ ``small''?
Let $p = (P_0, \dots , P_d)^T$ be the vector of polynomials~$P_i$ and consider the product~$U \cdot p = (Q_1, \dotsc , Q_d)^T = q$. 
Here, the upper triangular structure of $U$ guarantees that each~$Q_j = \sum_{i= j}^{d} U_{j,i} \cdot P_i$ has at most degree~$D-j$, since the first non-zero entry in the $j$-th row of $U$ corresponds to polynomial~$P_j$ of degree~$D-j$ while all other $P_i$'s with (possible) non-zero coefficients have a smaller degree.
Additionally, we can also recover~$P(Z_\ell)$ from only the recursively computed~$Q_j(Z_{\ell'})$ as $L$ has lower triangular structure---appropriately exploiting the fact that~$L \cdot q  = L \cdot U \cdot p = V \cdot p$.\footnote{The analogy to van der Hoeven and Schost's algorithm is as follows: Broadly speaking, their algorithm is based on efficient transformations between not only the coefficient and evaluation-based representations of polynomials, but also involving a third representation based on so-called \emph{Newton polynomials.} In this terminology, the matrix $U$ performs a basis change from the monomial basis to the Newton basis, and the matrix $L$ performs a basic change from the Newton basis to the evaluation basis.}
With these ideas in mind, we are now in the position to present the algorithm in detail.

\subsection{Trimmed Multipoint Evaluation}
\label{subsec:trimmed-multipoint-evualation}
Throughout this section, we will prove the following theorem.

\begin{theorem}[Trimmed Multipoint Evaluation]
    \label{thm:trimmed-eval}
    Let $P$ be an $n$-variate polynomial over~$\F$ with individual degree~$d$ and total degree~$D$, and let $Z$ be a grid. 
    The evaluations~$P(Z_\ell)$ can be computed for all $\ell \in \ebc{[n]}{\leq D}{d}$ in time \TEruntime.
\end{theorem}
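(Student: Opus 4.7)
I will prove \cref{thm:trimmed-eval} by induction on $n$, showing simultaneously that \TrimmedEval{} is correct and runs in time $O^*(\ebc{n}{\le D}{d})$. The base case $n=0$ is immediate. For the inductive step, the idea is to interpret the algorithm, on any fixed ``horizontal slice'' indexed by $\ell' \in \{0,\dots,d\}^{n-1}$, as multiplying the Vandermonde matrix $V = V(z_{n,0},\dots,z_{n,d})$ with the vector $(P_0(Z_{\ell'}),\dots,P_d(Z_{\ell'}))^T$ via the LU factorization $V = L \cdot U$.

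\textbf{Correctness.} Fix $\ell' \in \{0,\dots,d\}^{n-1}$. Substituting $X_i = z_{i,\ell'_i}$ into $P = \sum_i P_i \cdot X_n^i$ shows that
\begin{equation*}
    \begin{pmatrix} P(Z_{(\ell',0)}) \\ \vdots \\ P(Z_{(\ell',d)}) \end{pmatrix}
    = V \cdot \begin{pmatrix} P_0(Z_{\ell'}) \\ \vdots \\ P_d(Z_{\ell'}) \end{pmatrix}
    = L \cdot U \cdot \begin{pmatrix} P_0(Z_{\ell'}) \\ \vdots \\ P_d(Z_{\ell'}) \end{pmatrix}
    = L \cdot \begin{pmatrix} Q_0(Z_{\ell'}) \\ \vdots \\ Q_d(Z_{\ell'}) \end{pmatrix},
\end{equation*}
which is precisely the identity computed in the final loop (truncated to the first $k+1$ rows by lower-triangularity of $L$). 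I then need to verify two degree/support conditions. First, since $U$ is upper triangular, $Q_j = \sum_{i \ge j} U_{j,i} P_i$, and each $P_i$ has total degree at most $D-i \le D-j$; hence $Q_j$ is a valid input of total degree $D-j$ for the recursive call. Second, I have to confirm that the recursively computed values suffice: for $\ell = (\ell',i) \in \ebc{[n]}{\le D}{d}$ we have $i \le k = \min\{d, D - |\ell'|\}$, and lower-triangularity of $L$ shows that only $Q_j(Z_{\ell'})$ with $j \le i \le k$ are needed; equivalently, $Q_j(Z_{\ell'})$ is required exactly when $|\ell'| \le D-j$, i.e.\ $\ell' \in \ebc{[n-1]}{\le D-j}{d}$, which is exactly the output guaranteed by the $j$-th recursive call. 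This matching between ``what the recursion produces'' and ``what the reconstruction needs'' is the single place where I expect the argument to require the most care.

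\textbf{Running time.} The LU decomposition of an $(d{+}1) \times (d{+}1)$ matrix via Gaussian elimination takes $\poly(d)$ field operations. Computing the $Q_j$'s and the final reconstructions are linear combinations with $O(d^2)$ scalar-times-polynomial operations; since each operand has size at most $\ebc{n-1}{\le D}{d} \le \ebc{n}{\le D}{d}$, the local cost per recursion node is $O^*(\ebc{n}{\le D}{d})$. Denoting by $T(n,D)$ the total time, I obtain
\begin{equation*}
    T(n,D) \le \sum_{j=0}^{d} T(n-1, D-j) + O^*\parens*{\ebc{n}{\le D}{d}}.
\end{equation*}
The key combinatorial fact, proved by summing the extended Pascal identity (\cref{lem:extended-pascal}) over $k = 0,\dots,D$, is that $\sum_{j=0}^d \ebc{n-1}{\le D-j}{d} = \ebc{n}{\le D}{d}$. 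Applying this identity inductively, the total ``local work'' summed over all recursive calls at any fixed depth is $\ebc{n}{\le D}{d}$, and since the recursion has depth $n$, the total running time is $O(n) \cdot O^*(\ebc{n}{\le D}{d}) = O^*(\ebc{n}{\le D}{d})$, as claimed.
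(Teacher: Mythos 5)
Your proposal is correct and follows essentially the same route as the paper: induction on $n$, degree bound $D-j$ for $Q_j$ via upper-triangularity of $U$, recovery of $P(Z_{(\ell',j)})$ via lower-triangularity of $L$ restricted to the first $k+1$ rows, and the recurrence resolved by the summed extended Pascal identity $\sum_{j=0}^d \ebc{n-1}{\le D-j}{d} = \ebc{n}{\le D}{d}$. The only detail the paper makes explicit that you elide is that the \textsc{LU} decomposition of the Vandermonde matrix exists (which holds since the grid points $z_{n,0},\dots,z_{n,d}$ are distinct, so all leading principal minors are nonzero).
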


For the proof of \cref{thm:trimmed-eval}, consider the algorithm \TrimmedEval{}.
Throughout we assume that the algorithm has access to the grid~$Z$ and for simplicity, we omit $Z$ in the recursive calls.

\begin{lemma}[Correctness of \TrimmedEval{}]
    \label{lem:TECorrectness}
    Given an $n$-variate polynomial $P$ over~$\F$ with individual degree $d$ and total degree $D$, \TrimmedEval{} correctly computes the evaluations $P(Z_\ell)$ for all $\ell \in \ebc{[n]}{\leq D}{d}$.
\end{lemma}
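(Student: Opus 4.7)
The plan is to prove correctness by induction on $n$. The base case $n = 0$ is immediate, since $P$ is then a constant and $\ebc{[0]}{\leq D}{d}$ contains only the empty tuple whose evaluation is $P$ itself. For the inductive step I would verify three things in order: (i)~each $Q_j$ has total degree at most $D - j$, so that the recursive hypothesis applies; (ii)~the final combination step correctly recovers each $P(Z_{(\ell', i)})$ via the identity $V = L \cdot U$; and (iii)~the values $Q_j(Z_{\ell'})$ used in (ii) have indeed been produced by the recursive calls of (i).

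For (i), since $P$ has total degree $D$, each $P_i$ in the expansion $P = \sum_i P_i \cdot X_n^i$ is an $(n-1)$-variate polynomial of individual degree at most $d$ and total degree at most $D - i$. Because $U$ is upper triangular, $Q_j = \sum_{i=j}^d U_{j,i} \cdot P_i$, and the maximum total degree among these summands is attained at $i = j$, giving the bound $D - j$. Invoking the inductive hypothesis on $Q_j$ with parameters $(n-1, d, D-j)$ then yields the evaluations $Q_j(Z_{\ell'})$ for all $\ell' \in \ebc{[n-1]}{\leq D - j}{d}$, exactly as claimed by the recursive call.

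For (ii), fix any $\ell' \in \ebc{[n-1]}{\leq D}{d}$ and set $p = (P_0(Z_{\ell'}), \dots, P_d(Z_{\ell'}))^T$ and $q = (Q_0(Z_{\ell'}), \dots, Q_d(Z_{\ell'}))^T$. Evaluating the identity $(Q_0, \dots, Q_d)^T = U \cdot (P_0, \dots, P_d)^T$ at $Z_{\ell'}$ gives $q = U \cdot p$, and then for any $i \in \set{0, \dots, d}$,
\[
    P(Z_{(\ell', i)}) \;=\; \sum_{i'=0}^d P_{i'}(Z_{\ell'}) \cdot z_{n, i}^{i'} \;=\; (V p)_i \;=\; (L U p)_i \;=\; (L q)_i \;=\; \sum_{j=0}^{i} L_{i,j} \cdot Q_j(Z_{\ell'}),
\]
where the last equality uses that $L$ is lower triangular. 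This is precisely the formula computed by the algorithm. For (iii), note that the inner loop only accesses $Q_j(Z_{\ell'})$ for $j \leq k = \min\set{d, D - \sum_{j'=1}^{n-1} \ell'_{j'}}$; this condition forces $\sum_{j'} \ell'_{j'} \leq D - j$, so $\ell' \in \ebc{[n-1]}{\leq D - j}{d}$ and the value was indeed computed in a recursive call. A short bookkeeping check that $\ell' \in \ebc{[n-1]}{\leq D}{d}$ paired with $i \in \set{0, \dots, k}$ exactly enumerates $\ebc{[n]}{\leq D}{d}$ completes the argument.

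The main point—rather than a genuine obstacle—is recognizing how the two triangular structures mesh: $U$ keeps the total degrees of the $Q_j$'s small enough to fit within the trimmed recursive budgets, while $L$ ensures that only the first $i + 1$ of the $Q_j(Z_{\ell'})$ are needed to assemble each $P(Z_{(\ell', i)})$. The two constraints dovetail exactly, which is why the \emph{LU} decomposition of the Vandermonde matrix is the right tool for this problem.
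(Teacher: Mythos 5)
Your proof is correct and follows essentially the same route as the paper's: induction on $n$, the upper-triangular structure of $U$ giving $\deg Q_j \le D-j$ so the recursive calls fit the trimmed budgets, and the lower-triangular structure of $L$ together with $L\cdot U = V$ recovering each $P(Z_{(\ell',i)})$ from only the recursively computed values, with the same bookkeeping via $k$. The only detail the paper adds that you take for granted is the (cited) fact that a Vandermonde matrix on distinct nodes actually admits an \textsc{LU} decomposition, so that Line~3 of the algorithm is well-defined.
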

\begin{proof}
    Given a polynomial $P$ and a grid $Z$, we show that \TrimmedEval{} correctly computes all evaluations $P(Z_\ell)$ by induction on $n$.
    
    For $n=0$, the polynomial~$P$ does not depend on any variable. 
    Thus, we correctly return the constant~$P$. 
    For $n \ge 1$, we write \makebox{$P(X_1, \dots , X_{n}) = \sum_{i=0}^{d} P_i(X_1, \dots , X_{n-1}) \cdot X_{n}^i$} to obtain $d+1$ many $(n - 1)$-variate polynomials~$P_0, \dots , P_d$ of degrees~$D, \dots , D - d$, respectively.
    Formally, each polynomial~$P_i$ is the coefficient of $P$ viewed as a univariate polynomial in~$X_n$.
    Let $V \coloneqq V(z_{n,0}, \dots , z_{n,d})$ be a Vandermonde matrix.
    It is well-known that~$V$ is invertible if and only if all grid points $z_{n,i}$ are distinct, and in this case the \textsc{LU} decomposition~$V = L \cdot U$ in Line~\ref{algo:TrimmedEval:LU} always exists (see e.g.~\cite{Oruc2000}).
    Further, let $Q_0, \dotsc , Q_d$ as computed in Line~\ref{algo:TrimmedEval:polynomials:Q} of~\TrimmedEval{}.
    Note that each polynomial $Q_j(X_1, \dotsc , X_{n-1})  = \sum_{i=j}^{d} U_{j,i} \cdot P_i(X_1, \dotsc , X_{n-1})$ has degree~$D - j$.
    Indeed, the upper triangular structure of $U$ guarantees that $Q_j$ has at most degree $D-j$, because the first non-zero entry in the $j$-th row of $U$ corresponds to the polynomial $P_j$ of degree~$D-j$ while all other $P_i$'s with (possible) non-zero coefficients have a smaller degree.

    Calling \TrimmedEval$(Q_j)$, we recursively compute the evaluations~$Q_j(Z_{\ell'})$ for all~$\ell' \in \ebc{[n-1]}{\le D-j}{d}$ in Lines~\ref{algo:TrimmedEval:recursive:calls1} and~\ref{algo:TrimmedEval:recursive:calls2}.
    Next, we focus on any iteration $\ell'$ of the loop in Line~\ref{algo:TrimmedEval:evaluations:1}.
    Let~$k = \min\{d, D -  \sum_{j = 1}^{n-1} \ell'_j \}$, then we show that we correctly compute the evaluations~$P(Z_{\ell})$.
    For each $j \in \set{0, \dotsc, k}$, we have
    \begin{align*}
        \sum_{i= 0}^{k} L_{j,i} \cdot Q_i(Z_{\ell'}) 
        &= \sum_{i= 0}^{d} L_{j,i} \cdot Q_i(Z_{\ell'}) \\
        &= \sum_{i= 0}^{d} L_{j,i} \cdot \sum_{m= 0}^{d} U_{i,m} \cdot P_m(Z_{\ell'}) \\
        &= \sum_{i= 0}^{d} z_{n,j}^i \cdot P_i(Z_{\ell'}) \\
        &= P\parens*{Z_{(\ell', j)}},
    \end{align*}
    where the first equality follows from the fact that $L_{j,i} = 0$ whenever $i > k \ge j$.
    Notably, restricting the computation to $k$ is crucial to ensure that we only use the evaluations~$Q_j(Z_{\ell'})$ that we actually computed.
    This proves that \TrimmedEval{} correctly computes the evaluations $P\parens*{Z_{(\ell', 0)}}, \dotsc ,P\parens*{Z_{(\ell', k)}}$ in Line~\ref{algo:TrimmedEval:evaluations:2}.
    Since each $Z_\ell$ can be expressed as some~$Z_{(\ell', j)}$ as above, the loop in Line~\ref{algo:TrimmedEval:evaluations:1} recovers all evaluations~$P(Z_\ell)$.
    As a result, the algorithm \TrimmedEval{} correctly computes all evaluations~$P(Z_\ell)$.
\end{proof}

Lastly, we prove the running time of \TrimmedEval.

\begin{lemma}[Running Time of \TrimmedEval{}]
    \label{lem:TERunningTime}
    The algorithm \TrimmedEval{} runs in time \TEruntime{}.
\end{lemma}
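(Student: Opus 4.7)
The plan is to set up a recurrence $T(n, D)$ for the running time of \TrimmedEval{} on an input with $n$ variables and total degree $D$, and then show by induction on $n$ that $T(n, D) \le f(n, d) \cdot \ebc{n}{\le D}{d}$ for some polynomial factor $f(n, d)$, using the extended Pascal identity of \cref{lem:extended-pascal} as the pivotal telescoping tool.

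First I would bound the non-recursive cost of a single call. The \textsc{LU} decomposition of the $(d+1) \times (d+1)$ Vandermonde matrix in Line~\ref{algo:TrimmedEval:LU} costs $\poly(d)$ field operations by Gaussian elimination. For Line~\ref{algo:TrimmedEval:polynomials:Q}, computing $Q_j = \sum_{i=j}^d U_{j,i} \cdot P_i$ costs $O\bigl(\sum_{i=j}^d \ebc{n-1}{\le D-i}{d}\bigr)$ field operations, since each $P_i$ has $\ebc{n-1}{\le D-i}{d}$ coefficients. Summing over $j \in \set{0,\dots,d}$ and exchanging the order of summation yields a bound of $(d+1) \sum_{i=0}^d \ebc{n-1}{\le D-i}{d} = (d+1) \cdot \ebc{n}{\le D}{d}$, invoking \cref{lem:extended-pascal}. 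Finally, the loop in Lines~\ref{algo:TrimmedEval:evaluations:1}--\ref{algo:TrimmedEval:evaluations:2} performs, for each of the $\ebc{n-1}{\le D}{d} \le \ebc{n}{\le D}{d}$ indices $\ell'$, one lower-triangular matrix-vector product of dimension at most $d+1$, costing $O(d^2)$ each. In total, the non-recursive work is at most $\poly(d) \cdot \ebc{n}{\le D}{d}$.

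Next I would handle the recursion. The recursive calls yield $\sum_{j=0}^d T(n-1, D-j)$, so the recurrence is
\begin{equation*}
    T(n, D) \le \sum_{j=0}^d T(n-1, D-j) + c \cdot \poly(d) \cdot \ebc{n}{\le D}{d},
\end{equation*}
with base case $T(0, D) = O(1)$. Applying the inductive hypothesis $T(n-1, D-j) \le f(n-1, d) \cdot \ebc{n-1}{\le D-j}{d}$ and again invoking \cref{lem:extended-pascal} to collapse $\sum_{j=0}^d \ebc{n-1}{\le D-j}{d} = \ebc{n}{\le D}{d}$, the recurrence becomes $T(n, D) \le (f(n-1, d) + c \cdot \poly(d)) \cdot \ebc{n}{\le D}{d}$. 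Unfolding in $n$ gives $f(n, d) = n \cdot \poly(d)$ and hence the claimed running time \TEruntime{}.

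I do not expect a genuine obstacle here: the algorithm was designed precisely so that the extended Pascal identity makes the non-recursive work per level, and also the summed recursive subproblem sizes, both telescope against $\ebc{n}{\le D}{d}$. The only point requiring minor care is the computation of the $Q_j$'s in Line~\ref{algo:TrimmedEval:polynomials:Q}, where one must avoid the naive $O(d^2) \cdot \ebc{n-1}{\le D}{d}$ bound and instead exploit the varying degrees $D - i$ of the $P_i$'s to keep this step within $\poly(d) \cdot \ebc{n}{\le D}{d}$ rather than $\poly(d) \cdot (d+1) \cdot \ebc{n-1}{\le D}{d}$ (which would still be fine, but is a touch looser).
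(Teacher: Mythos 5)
Your proposal is correct and follows essentially the same route as the paper: bound the non-recursive work of one call by $\poly(n,d)\cdot\ebc{n}{\le D}{d}$, set up the recurrence $T(n,D)\le\sum_{j=0}^{d}T(n-1,D-j)+\poly(n,d)\cdot\ebc{n}{\le D}{d}$, and induct on $n$ using the extended Pascal identity to collapse $\sum_{j=0}^{d}\ebc{n-1}{\le D-j}{d}=\ebc{n}{\le D}{d}$, yielding the extra factor of $n$. Your accounting for Line~\ref{algo:TrimmedEval:polynomials:Q} is in fact slightly sharper than needed, as you yourself note; the paper is content with the coarser per-grid-point $\poly(n,d)$ bound.
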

\begin{proof}
    Let $T(n, d, D)$ be the running time of \TrimmedEval{}.
    The time to compute the \textsc{LU} decomposition in Line~$4$ and the matrix-vector products in Lines~$5$ and $9$ can be bounded by $\poly(n,d)$.
    Consequently, the running time of the algorithm \emph{without} the recursive calls in Line~$7$ is $\ebc{n}{\leq D}{d} \cdot M(n,d)$ for some function\footnote{In our context, we are content with bounding $M(n, d) = \poly(n, d)$; however, let us remark that following van der Hoeven and Schost~\cite{HoevenS13} we could achieve a dependence on $d$ that is only polylogarithmic. The main insight is that the L- and U-factors of a Vandermonde matrix are sufficiently structured to support matrix-vector operations in time \smash{$O(d (\log d)^{O(1)})$}.}~$M(n,d) = \poly(n,d)$.
    Thus, the algorithm admits the following recurrence
    \begin{align*}
        T(n,d,D) &\le \ebc{n}{\leq D}{d} \cdot M(n,d) + \sum_{i=0}^d T(n-1, d, D-i).
    \end{align*}
    By induction on $n$, we show that $T(n, d, D) \le \ebc{n}{\leq D}{d} \cdot M(n,d) \cdot n$.
    Indeed, it holds that
    \begin{align*}
        T(n, d, D) 
        &\le  \ebc{n}{\leq D}{d} \cdot M(n,d) + \sum_{i=0}^d T(n-1, d, D-i)\\
        &\le \ebc{n}{\leq D}{d} \cdot M(n,d) + \sum_{i=0}^d \ebc{n}{\leq D-i}{d} \cdot M(n,d)  \cdot(n-1)\\
        &= \ebc{n}{\leq D}{d} \cdot M(n,d) + \sum_{i=0}^d \sum^{D-i}_{j=0} \ebc{n- 1}{j}{d} \cdot M(n,d)\cdot (n-1)\\
        &\le \ebc{n}{\leq D}{d} \cdot M(n,d) + \sum^{D}_{j=0} \ebc{n}{j}{d} \cdot M(n,d) \cdot (n-1)\\
        &= \ebc{n}{\leq D}{d} \cdot M(n,d) + \ebc{n}{\le D}{d} \cdot M(n,d) \cdot (n-1)\\
        &= \ebc{n}{\le D}{d} \cdot M(n,d) \cdot n
    \end{align*}
    Therefore, \TrimmedEval{} runs in time \TEruntime.
\end{proof}

Combining \cref{lem:TECorrectness} and \cref{lem:TERunningTime} concludes the proof of \cref{thm:trimmed-eval}.

\begin{algorithm}[t!]
    \caption{\label{algo:TrimmedInt}%
    \TrimmedInt$((\alpha_\ell)_\ell)$}
    \SetKwInOut{Input}{Input}
    \SetKwInOut{Output}{Output}
    \DontPrintSemicolon
    
    \Input{Evaluations $\alpha_\ell$ for all $\ell \in \ebc{[n]}{\leq D}{d}$}
    \Output{Unique $n$-variate polynomial $P$ of individual degree $d$ and total degree $D$ such that $P(Z_\ell) = \alpha_\ell$ for all $\ell \in \ebc{[n]}{\leq D}{d}$}

    \medskip
    
    \lIf{$n = 0$}{\Return the constant $P$. \label{algo:TrimmedInt:line:q}}
    Compute the \textsc{UL} decomposition $V(z_{n,0}, \dots, z_{n,d})^{-1} = U \cdot L$. \label{algo:TrimmedInt:LU} \;
    \For{$\ell' \in \ebc{[n-1]}{\le D}{d}$}{
        Let $k = \min\{d, D -  \sum_{j = 1}^{n-1} \ell'_j\}$ and compute all evaluations
        \begin{align*}
            \hspace{1.25cm}
            \begin{pmatrix} 
                \beta_{(\ell', 0)} \\
                \vdots \\
                \beta_{(\ell', k)} \\
            \end{pmatrix}
            =
            \begin{pmatrix}
                L_{0, 0} &        &         \\
                \vdots   & \ddots &         \\
                L_{k, 0} & \cdots & U_{k,k} \\
            \end{pmatrix}
            \cdot
            \begin{pmatrix} 
                \alpha_{(\ell', 0)} \\
                \vdots \\
                \alpha_{(\ell', k)} \\
            \end{pmatrix}. \;
        \end{align*}
    }

    \For{$j = 0,\dots, d$}{
        Recursively call \TrimmedInt$\parens*{(\beta_{(\ell', j)})_{\ell'}}$ for ${\ell'} \in \ebc{[n-1]}{\le D-j}{d}$ to interpolate the $(n-1)$-variate polynomial $Q_j$ (of total degree $D - j$). \label{algo:TrimmedInt:recurse}\;
    }

    Compute the vector of polynomials
        \begin{align*}
            \hspace{2.3cm}
            \begin{pmatrix} 
                P_0 \\
                \vdots \\
                P_d \\
            \end{pmatrix}
            =
            \begin{pmatrix}
                U_{0, 0} & \cdots & U_{0, d} \\
                         & \ddots & \vdots   \\
                         &        & U_{d, d} \\
            \end{pmatrix}
            \cdot
            \begin{pmatrix} 
                Q_0 \\
                \vdots \\
                Q_d \\
            \end{pmatrix}. \; 
        \end{align*}

    Compute $P(X_1, \dots , X_{n}) = \sum_{i=0}^{d} P_i(X_1, \dots , X_{n-1}) \cdot X_{n}^i$. \;
\end{algorithm}

\subsection{Trimmed Interpolation}

Throughout this section, we show that a polynomial can be (uniquely) interpolated from its evaluations on the trimmed grid points---using essentially the same algorithmic approach as in \cref{subsec:trimmed-multipoint-evualation}.

\begin{theorem}[Trimmed Interpolation]
    \label{thm:trimmed-int}
    Let $\alpha_\ell \in \mathbb F$ for~$\ell \in \ebc{[n]}{\leq D}{d}$, and let~$Z$ be a grid.  
    The unique $n$-variate polynomial $P$ with individual degree~$d$ and total degree~$D$ that satisfies $P(Z_\ell) = \alpha_\ell$ for all~$\ell \in \ebc{[n]}{\leq D}{d}$ can be computed in time \TEruntime.
\end{theorem}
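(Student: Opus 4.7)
The plan is to prove \cref{thm:trimmed-int} by induction on $n$, mirroring the inductive structure of the correctness proof in \cref{lem:TECorrectness}. The running-time analysis carries over verbatim from \cref{lem:TERunningTime}: \TrimmedInt{} makes $d+1$ recursive calls on $(n-1)$-variate instances with total degrees $D, D-1, \dots, D-d$, and the non-recursive work per level (one LU decomposition plus two sets of triangular matrix-vector products) is bounded by $\ebc{n}{\leq D}{d} \cdot \poly(n,d)$, so the same recurrence yields the bound \TEruntime{}.

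For correctness, I would argue that \TrimmedInt{} inverts \TrimmedEval{} step by step. Concretely, let $P$ be the unique polynomial interpolating the inputs, write $P = \sum_{i=0}^{d} P_i X_n^i$, set $V := V(z_{n,0}, \dots, z_{n,d})$, and consider the intermediate polynomials $Q_0, \dots, Q_d$ that \TrimmedEval{}$(P)$ would produce. By \cref{lem:TECorrectness} we have $\deg Q_j \leq D - j$, together with two identities: (i) the upper-triangular polynomial relation $(Q_0,\dots,Q_d)^T = U^V \cdot (P_0,\dots,P_d)^T$, where $V = L^V U^V$ is the standard LU decomposition of $V$, and (ii) the lower-triangular evaluation relation $\alpha_{(\ell',j)} = \sum_{i=0}^{j} L^V_{j,i}\, Q_i(Z_{\ell'})$ for $j = 0, \dots, k$, with $k = \min\{d, D - \sum_{m=1}^{n-1} \ell'_m\}$. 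The two matrix-vector products in \TrimmedInt{} implement the inversions of (ii) and (i), respectively: the first, via the $U$-factor of $V^{-1}$ applied to the available truncated vector of $\alpha$'s, yields $\beta_{(\ell',j)} = Q_j(Z_{\ell'})$; the inductive hypothesis then applies to the $d+1$ recursive calls (which are well-defined precisely because $\deg Q_j \leq D - j$) and recovers each polynomial $Q_j$; and the second product, via the $L$-factor of $V^{-1}$, reassembles the polynomials $P_0, \dots, P_d$, from which $P = \sum_i P_i X_n^i$ is obtained.

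The main obstacle I anticipate is the algebraic bookkeeping: identifying the $L$ and $U$ factors of $V^{-1} = L \cdot U$ with the inverses $(L^V)^{-1}$ and $(U^V)^{-1}$ of $V$'s LU factors, via the identity $V^{-1} = (U^V)^{-1}(L^V)^{-1}$, and checking that the two triangular matrix-vector products in \TrimmedInt{} genuinely execute multiplication by $(L^V_{[0:k]})^{-1}$ and $(U^V)^{-1}$ after the truncation to the top-left $(k+1) \times (k+1)$ block. Once this identification is in place, the degree bound $\deg P_j \leq D - j$ in the final reconstruction step follows from the same upper-triangular argument as in \cref{lem:TECorrectness}, and the remaining correctness claim is a direct application of the inductive hypothesis, completing the proof.
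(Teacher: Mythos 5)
Your recursion scheme and running-time analysis match the paper's, but the correctness argument rests on an identification that is false. You propose to identify the factors of the decomposition $V^{-1}=L\cdot U$ computed in Line~\ref{algo:TrimmedInt:LU} of \TrimmedInt{} with $(L^V)^{-1}$ and $(U^V)^{-1}$, where $V=L^V U^V$, ``via the identity $V^{-1}=(U^V)^{-1}(L^V)^{-1}$.'' That identity, however, expresses $V^{-1}$ as an \emph{upper-times-lower} product: $(U^V)^{-1}$ is upper triangular and $(L^V)^{-1}$ is lower triangular, so setting $L=(L^V)^{-1}$ and $U=(U^V)^{-1}$ would give $L\cdot U=(U^V L^V)^{-1}\neq V^{-1}$ in general. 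The genuine \textsc{LU} factors of $V^{-1}$ are different matrices, and consequently the values $\beta_{(\ell',j)}$ produced by the first triangular product of \TrimmedInt{} are \emph{not} the evaluations $Q_j(Z_{\ell'})$ of the intermediate polynomials that \TrimmedEval$(P)$ would form. Note the orientation constraint that makes this unavoidable: inverting the last step of \TrimmedEval{}, i.e.\ recovering $(Q_0(Z_{\ell'}),\dots,Q_k(Z_{\ell'}))$ from $(\alpha_{(\ell',0)},\dots,\alpha_{(\ell',k)})$, means applying the \emph{lower}-triangular matrix $(L^V_{[0:k]})^{-1}$, whereas the first product in \TrimmedInt{} applies an \emph{upper}-triangular block. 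So the plan of ``undoing \TrimmedEval{} step by step'' cannot be executed for the algorithm as written; it describes a different algorithm, namely the one based on the factorization $V^{-1}=(U^V)^{-1}\cdot(L^V)^{-1}$ with the lower-triangular factor applied first and the upper-triangular factor last.

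The paper's own proof takes a different route that sidesteps any comparison with \TrimmedEval{}: it takes $Q_j$ to be whatever the recursive call returns, unwinds the definitions $P_i=\sum_m L_{i,m}Q_m$ and $\beta_{(\ell',m)}=\sum_{h\le k}U_{m,h}\alpha_{(\ell',h)}$, and collapses the resulting sums using only $L\cdot U=V^{-1}$ and the triangularity/truncation facts to conclude $P(Z_{(\ell',j)})=\alpha_{(\ell',j)}$ directly. I would add that your observation is more than ``algebraic bookkeeping'': the orientation of the two triangular factors genuinely affects which evaluations are reproduced (already the case $n=2$, $d=D=1$ distinguishes the two variants, since for $\ell'$ with $k<d$ the upper-triangular first step consumes entries $\alpha_{(\ell',h)}$ with $h>k$ that are silently truncated, while the substitution $Q_m(Z_{\ell'})=\beta_{(\ell',m)}$ is only justified for $m\le k$). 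So before relying on either your identification or the paper's direct computation, you should pin down precisely which factorization of $V^{-1}$ the algorithm is meant to use; your ``inverse of \TrimmedEval{}'' reading has the advantage that it also certifies that the output has total degree at most $D$, which the direct evaluation check alone does not address.
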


Consider the algorithm \TrimmedInt{} for the proof of \cref{thm:trimmed-int}.
As before, we assume that the algorithm has access to the grid~$Z$ and for simplicity, we omit $Z$ in the recursive calls.

\begin{lemma}[Correctness of \TrimmedInt{}]
    \label{lem:TICorrectness}
   Given evaluations $\alpha_\ell$ for~$\ell \in \ebc{[n]}{\leq D}{d}$, \TrimmedInt{} correctly interpolates the unique $n$-variate polynomial~$P$ with individual degree $d$ and total degree $D$ such that $P(Z_\ell) = \alpha_\ell$.
\end{lemma}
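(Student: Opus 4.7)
I would prove this by induction on $n$, mirroring the inductive proof of \cref{lem:TECorrectness}. The base case $n = 0$ is immediate, since \TrimmedInt{} returns the single input value~$\alpha_{()}$, which is the unique interpolating constant.

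For the inductive step, let $P^*$ denote the unique $n$-variate polynomial with individual degree~$d$ and total degree~$D$ satisfying $P^*(Z_\ell) = \alpha_\ell$ (whose existence and uniqueness is classical, cf.~\cite{Sauer04}). Decomposing $P^* = \sum_{i=0}^{d} P_i^* \cdot X_n^i$ as in \TrimmedEval{}, each $P_i^*$ is an $(n-1)$-variate polynomial of individual degree~$d$ and total degree $\le D - i$. The goal is to show that \TrimmedInt{} outputs exactly~$P = P^*$.

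The key step is to identify, for each $j \in \set{0, \dots, d}$, an auxiliary $(n-1)$-variate polynomial~$Q_j^*$ of individual degree~$d$ and total degree $\le D - j$ such that the intermediate values computed by the algorithm satisfy $\beta_{(\ell', j)} = Q_j^*(Z_{\ell'})$ for every $\ell' \in \ebc{[n-1]}{\le D-j}{d}$. The natural candidate is a suitable triangular combination of the~$P_m^*$'s dictated by the factor~$U$ (dually to the definition of the forward-direction $Q_j$'s), whose degree bound $\le D - j$ can be read off directly from the triangular structure. Once this identity is in place, the induction hypothesis implies that the recursive call correctly interpolates $Q_j = Q_j^*$, and the final matrix-vector product with~$L$ followed by the assembly $P = \sum_i P_i \cdot X_n^i$ yields~$P^*$; this last verification is a direct symbolic calculation that invokes the identity $L \cdot U = V^{-1}$ and is the precise dual of the $L$-product step used in the proof of \cref{lem:TECorrectness}.

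The main obstacle will be proving the identity $\beta_{(\ell', j)} = Q_j^*(Z_{\ell'})$. Two dual facts have to be verified simultaneously: that the truncation $k_{\ell'} = \min\set{d, D - \sum_i \ell'_i}$ in the definition of~$\beta$ is harmless on the index set $\ebc{[n-1]}{\le D-j}{d}$ (so that no ``missing'' $\alpha$-values are actually needed to recover $Q_j^*(Z_{\ell'})$), and that~$Q_j^*$ really has total degree~$\le D - j$ (so that the recursion applies and $Q_j^*$ is uniquely determined by the available evaluations). Both facts flow from the interplay of the triangular structures of~$L$ and~$U$, but the index tracking is the dual counterpart of the degree analysis performed for the~$Q_j$'s in the forward algorithm and is where the bulk of the bookkeeping lives.
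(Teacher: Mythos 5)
Your plan takes a genuinely different route from the paper's, and the step you defer as ``the bulk of the bookkeeping'' is precisely where it breaks. You want a polynomial $Q_j^*$ that simultaneously (i) has total degree $\le D-j$ and (ii) satisfies $Q_j^*(Z_{\ell'})=\beta_{(\ell',j)}$, and you propose the combination of the $P_m^*$'s ``dictated by $U$''. But the input to interpolation is in the \emph{evaluation} basis: $\alpha_{(\ell',h)} = P^*(Z_{(\ell',h)}) = (V p^*(Z_{\ell'}))_h$ with $p^*=(P_0^*,\dots,P_d^*)^T$, so (ignoring the truncation) $\beta_{(\ell',\cdot)} = U V p^*(Z_{\ell'}) = L^{-1}p^*(Z_{\ell'})$, since $V^{-1}=LU$ gives $UV=L^{-1}$. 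The combination matching the $\beta$-values is therefore $\sum_{m\le j}(L^{-1})_{j,m}P_m^*$, which is \emph{lower}-triangular in the $P_m^*$'s and hence only has degree $\le D$ (it involves $P_0^*$), not $\le D-j$; conversely, the upper-triangular combination $\sum_{m\ge j}U_{j,m}P_m^*$ has the right degree but evaluates to $(U p^*(Z_{\ell'}))_j\ne\beta_{(\ell',j)}$. On top of this, the truncation to $k=\min\{d,\,D-\sum_i\ell'_i\}$ genuinely drops the unavailable terms $U_{j,h}\alpha_{(\ell',h)}$ for $h>k$ (those points lie outside the trimmed grid), so the $\beta$-values are not the exact evaluations of \emph{any} fixed linear combination of the $P_m^*$'s uniformly in $\ell'$. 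So no candidate $Q_j^*$ with both of your required properties exists, and the truncation is not ``harmless'' in the sense your argument needs.

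The paper sidesteps all of this by never identifying $Q_j$ semantically. The induction hypothesis is used only in the form ``the recursive call returns \emph{some} polynomial $Q_j$ of total degree $\le D-j$ with $Q_j(Z_{\ell'})=\beta_{(\ell',j)}$ on $\ebc{[n-1]}{\le D-j}{d}$,'' and correctness of the assembled output is then verified \emph{directly}: expanding $P(Z_{(\ell',j)})=\sum_i V_{j,i}\sum_m L_{i,m}\beta_{(\ell',m)}=\sum_{h\le k}(VV^{-1})_{j,h}\,\alpha_{(\ell',h)}=\alpha_{(\ell',j)}$, where the triangularity of $L$ and $U$ together with $j\le k$ guarantee that only computed $\beta$'s and given $\alpha$'s ever enter. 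This forward verification is what you should put in place of the $Q_j^*$-identification; also note that the paper never needs to presuppose the existence of $P^*$, which your argument takes as given.
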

\begin{proof}
    Given the evaluations $\alpha_\ell$ for all~$\ell \in \ebc{[n]}{\leq D}{d}$, we show that \TrimmedInt{} correctly interpolates the unique $n$-variate polynomial $P$ by induction on~$n$.
    If $n=0$, simply return the constant~$P$ as it does not depend on any variable.
    For the remainder of the proof we assume $n\ge 1$. The induction hypothesis implies that we have $Q_j(Z_\ell') = \beta_{(\ell', j)}$ as defined in Line~\ref{algo:TrimmedInt:recurse}, for all $\ell' \in \ebc{[n-1]}{\leq D-j}{d}$.
    
    Let $V \coloneqq V(z_{n,0}, \dots , z_{n,d})$.
    It is well-known that~$V$ is invertible if and only if the grid points~$z_{n,0}, \dots, z_{n,d}$ are distinct.
    Moreover, in this case the \textsc{UL} decomposition $V^{-1} = U \cdot L$ from Line~\ref{algo:TrimmedInt:LU} always exists (see e.g.~\cite{Oruc2000}). As the matrices $U$ and $L$ are invertible, too, in the following calculation we will use the equivalent formulation $V \cdot U = L^{-1}$.
    
    We now show that $P(Z_\ell) = \alpha_\ell$ for all~\smash{$\ell \in \ebc{[n]}{\leq D}{d}$}. Note that each~$Z_\ell$ can be expressed as some~$Z_{(\ell',j)}$ with \smash{$\ell' \in \ebc{[n-1]}{\le D-j}{d}$} and $\sum_{i=1}^{n-1} \ell'_i + j \leq D$.
    Fix such a pair~$\ell'$ and $j$. As in the algorithm, let $k = \min\{d, D - \sum_{i=0}^{n-1} \ell_i'\}$. Clearly we have $j \leq k$. It holds that:
    \begin{align*}
        P\parens*{Z_{(\ell', j)}}
        &= \sum_{i=0}^{d} z_{n,j}^i \cdot P_i(Z_{\ell'}) \\
        &= \sum_{i=0}^{d} V_{j, i} \cdot \sum_{m=0}^{d} U_{i, m} \cdot Q_m(Z_{\ell'})  \\
        &= \sum_{i=0}^{d} V_{j, i} \cdot \sum_{m=i}^{d} U_{i, m} \cdot Q_m(Z_{\ell'})  \\
        &= \sum_{m=0}^{d} Q_m(Z_{\ell'}) \cdot \sum_{i=0}^m V_{j, i} \cdot U_{i, m} \\
        &= \sum_{m=0}^{d} Q_m(Z_{\ell'}) \cdot (L^{-1})_{j, m} \\
        &= \sum_{m=0}^{j} Q_m(Z_{\ell'}) \cdot (L^{-1})_{j, m} \\
        &= \sum_{m=0}^{j} \beta_{(\ell', m)} \cdot (L^{-1})_{j, m} \\
        &= \sum_{m=0}^{k} \beta_{(\ell', m)} \cdot (L^{-1})_{j, m} \\
        &= \alpha_{(\ell', j)}.
    \end{align*}
    Consequently, \TrimmedInt{} correctly interpolates the unique $n$-variate polynomial~$P$ with individual degree $d$ and total degree $D$ such that~$P(Z_\ell) = \alpha_\ell$.
\end{proof}
    
As the recursion scheme of \TrimmedInt{} and \TrimmedEval{} are exactly the same, we refer to \cref{lem:TERunningTime} for a proof of the running time of \TrimmedInt. 
This concludes the proof of \cref{thm:trimmed-int}.

\bibliographystyle{plainurl}
\bibliography{refs}

@inproceedings{Umans08,
  author       = {Christopher Umans},
  editor       = {Cynthia Dwork},
  title        = {Fast polynomial factorization and modular composition in small characteristic},
  booktitle    = {40th Annual {ACM} Symposium on Theory of Computing ({STOC} 2008)},
  pages        = {481--490},
  publisher    = {{ACM}},
  year         = {2008},
  url          = {https://doi.org/10.1145/1374376.1374445},
  doi          = {10.1145/1374376.1374445},
  timestamp    = {Tue, 06 Nov 2018 11:07:06 +0100},
  biburl       = {https://dblp.org/rec/conf/stoc/Umans08.bib},
  bibsource    = {dblp computer science bibliography, https://dblp.org}
}

@inproceedings{BhargavaG0M22,
  author       = {Vishwas Bhargava and
                  Sumanta Ghosh and
                  Mrinal Kumar and
                  Chandra Kanta Mohapatra},
  editor       = {Stefano Leonardi and
                  Anupam Gupta},
  title        = {Fast, algebraic multivariate multipoint evaluation in small characteristic
                  and applications},
  booktitle    = {54th Annual {ACM} Symposium on Theory of Computing ({STOC} 2022)},
  pages        = {403--415},
  publisher    = {{ACM}},
  year         = {2022},
  url          = {https://doi.org/10.1145/3519935.3519968},
  doi          = {10.1145/3519935.3519968},
  timestamp    = {Tue, 27 Dec 2022 09:06:31 +0100},
  biburl       = {https://dblp.org/rec/conf/stoc/BhargavaG0M22.bib},
  bibsource    = {dblp computer science bibliography, https://dblp.org}
}

@inproceedings{BjorklundH13,
  author       = {Andreas Bj{\"{o}}rklund and
                  Thore Husfeldt},
  title        = {The Parity of Directed Hamiltonian Cycles},
  booktitle    = {54th Annual {IEEE} Symposium on Foundations of Computer Science, {FOCS}
                  2013, 26-29 October, 2013, Berkeley, CA, {USA}},
  pages        = {727--735},
  publisher    = {{IEEE} Computer Society},
  year         = {2013},
  url          = {https://doi.org/10.1109/FOCS.2013.83},
  doi          = {10.1109/FOCS.2013.83},
  timestamp    = {Thu, 23 Mar 2023 23:57:53 +0100},
  biburl       = {https://dblp.org/rec/conf/focs/BjorklundH13.bib},
  bibsource    = {dblp computer science bibliography, https://dblp.org}
}

@inproceedings{BhargavaGG0U22,
  author       = {Vishwas Bhargava and
                  Sumanta Ghosh and
                  Zeyu Guo and
                  Mrinal Kumar and
                  Chris Umans},
  title        = {Fast Multivariate Multipoint Evaluation Over All Finite Fields},
  booktitle    = {63rd {IEEE} Annual Symposium on Foundations of Computer Science ({FOCS} 2022)},
  pages        = {221--232},
  publisher    = {{IEEE}},
  year         = {2022},
  url          = {https://doi.org/10.1109/FOCS54457.2022.00028},
  doi          = {10.1109/FOCS54457.2022.00028},
  timestamp    = {Sat, 31 Dec 2022 17:18:44 +0100},
  biburl       = {https://dblp.org/rec/conf/focs/BhargavaGG0U22.bib},
  bibsource    = {dblp computer science bibliography, https://dblp.org}
}

@inproceedings{LokshtanovPTWY17,
  author       = {Daniel Lokshtanov and
                  Ramamohan Paturi and
                  Suguru Tamaki and
                  R. Ryan Williams and
                  Huacheng Yu},
  editor       = {Philip N. Klein},
  title        = {Beating Brute Force for Systems of Polynomial Equations over Finite
                  Fields},
  booktitle    = {28th Annual {ACM-SIAM} Symposium on Discrete Algorithms ({SODA} 2017)},
  pages        = {2190--2202},
  publisher    = {{SIAM}},
  year         = {2017},
  url          = {https://doi.org/10.1137/1.9781611974782.143},
  doi          = {10.1137/1.9781611974782.143},
  timestamp    = {Tue, 02 Feb 2021 17:07:52 +0100},
  biburl       = {https://dblp.org/rec/conf/soda/LokshtanovPTWY17.bib},
  bibsource    = {dblp computer science bibliography, https://dblp.org}
}

@inproceedings{Dinur21,
  author       = {Itai Dinur},
  editor       = {D{\'{a}}niel Marx},
  title        = {Improved Algorithms for Solving Polynomial Systems over {GF(2)} by
                  Multiple Parity-Counting},
  booktitle    = {32nd Annual {ACM-SIAM} Symposium on Discrete Algorithms ({SODA} 2021)},
  pages        = {2550--2564},
  publisher    = {{SIAM}},
  year         = {2021},
  url          = {https://doi.org/10.1137/1.9781611976465.151},
  doi          = {10.1137/1.9781611976465.151},
  timestamp    = {Thu, 15 Jul 2021 13:48:55 +0200},
  biburl       = {https://dblp.org/rec/conf/soda/Dinur21.bib},
  bibsource    = {dblp computer science bibliography, https://dblp.org}
}

@inproceedings{DellHKW25,
  author       = {Holger Dell and
                  Anselm Haak and
                  Melvin Kallmayer and
                  Leo Wennmann},
  editor       = {Yossi Azar and
                  Debmalya Panigrahi},
  title        = {Solving Polynomial Equations Over Finite Fields},
  booktitle    = {36th Annual {ACM-SIAM} Symposium on Discrete Algorithms ({SODA} 2025)},
  pages        = {2779--2803},
  publisher    = {{SIAM}},
  year         = {2025},
  url          = {https://doi.org/10.1137/1.9781611978322.90},
  doi          = {10.1137/1.9781611978322.90},
  timestamp    = {Tue, 28 Jan 2025 14:38:41 +0100},
  biburl       = {https://dblp.org/rec/conf/soda/DellHKW25.bib},
  bibsource    = {dblp computer science bibliography, https://dblp.org}
}

@inproceedings{BjorklundK019,
  author       = {Andreas Bj{\"{o}}rklund and
                  Petteri Kaski and
                  Ryan Williams},
  editor       = {Christel Baier and
                  Ioannis Chatzigiannakis and
                  Paola Flocchini and
                  Stefano Leonardi},
  title        = {Solving Systems of Polynomial Equations over {GF(2)} by a Parity-Counting
                  Self-Reduction},
  booktitle    = {46th International Colloquium on Automata, Languages, and Programming ({ICALP} 2019)},
  series       = {LIPIcs},
  volume       = {132},
  pages        = {26:1--26:13},
  publisher    = {Schloss Dagstuhl - Leibniz-Zentrum f{\"{u}}r Informatik},
  year         = {2019},
  url          = {https://doi.org/10.4230/LIPIcs.ICALP.2019.26},
  doi          = {10.4230/LIPICS.ICALP.2019.26},
  timestamp    = {Wed, 21 Aug 2024 22:46:00 +0200},
  biburl       = {https://dblp.org/rec/conf/icalp/BjorklundK019.bib},
  bibsource    = {dblp computer science bibliography, https://dblp.org}
}

@inproceedings{NuskenZ04,
  author       = {Michael N{\"{u}}sken and
                  Martin Ziegler},
  editor       = {Susanne Albers and
                  Tomasz Radzik},
  title        = {Fast Multipoint Evaluation of Bivariate Polynomials},
  booktitle    = {12th Annual European Symposium on Algorithms ({ESA} 2004)},
  series       = {Lecture Notes in Computer Science},
  volume       = {3221},
  pages        = {544--555},
  publisher    = {Springer},
  year         = {2004},
  url          = {https://doi.org/10.1007/978-3-540-30140-0\_49},
  doi          = {10.1007/978-3-540-30140-0\_49},
  timestamp    = {Tue, 14 May 2019 10:00:54 +0200},
  biburl       = {https://dblp.org/rec/conf/esa/NuskenZ04.bib},
  bibsource    = {dblp computer science bibliography, https://dblp.org}
}

@article{BjorklundKW19,
  author       = {Andreas Bj{\"{o}}rklund and
                  Petteri Kaski and
                  Ryan Williams},
  title        = {Generalized Kakeya sets for polynomial evaluation and faster computation
                  of fermionants},
  journal      = {Algorithmica},
  volume       = {81},
  number       = {10},
  pages        = {4010--4028},
  year         = {2019},
  url          = {https://doi.org/10.1007/s00453-018-0513-7},
  doi          = {10.1007/S00453-018-0513-7},
  timestamp    = {Thu, 31 Oct 2019 17:50:16 +0100},
  biburl       = {https://dblp.org/rec/journals/algorithmica/BjorklundKW19.bib},
  bibsource    = {dblp computer science bibliography, https://dblp.org}
}

@inproceedings{Patarin96,
  author       = {Jacques Patarin},
  editor       = {Ueli M. Maurer},
  title        = {Hidden Fields Equations {(HFE)} and Isomorphisms of Polynomials {(IP):}
                  Two New Families of Asymmetric Algorithms},
  booktitle    = {International Conference on the Theory and Application of Cryptographic Techniques ({EUROCRYPT} 1996)},
  series       = {Lecture Notes in Computer Science},
  volume       = {1070},
  pages        = {33--48},
  publisher    = {Springer},
  year         = {1996},
  url          = {https://doi.org/10.1007/3-540-68339-9\_4},
  doi          = {10.1007/3-540-68339-9\_4},
  timestamp    = {Tue, 14 May 2019 10:00:53 +0200},
  biburl       = {https://dblp.org/rec/conf/eurocrypt/Patarin96.bib},
  bibsource    = {dblp computer science bibliography, https://dblp.org}
}

@inproceedings{KipnisPG99,
  author       = {Aviad Kipnis and
                  Jacques Patarin and
                  Louis Goubin},
  editor       = {Jacques Stern},
  title        = {Unbalanced Oil and Vinegar Signature Schemes},
  booktitle    = {International Conference on the Theory and Application of Cryptographic Techniques ({EUROCRYPT} 1999)},
  series       = {Lecture Notes in Computer Science},
  volume       = {1592},
  pages        = {206--222},
  publisher    = {Springer},
  year         = {1999},
  url          = {https://doi.org/10.1007/3-540-48910-X\_15},
  doi          = {10.1007/3-540-48910-X\_15},
  timestamp    = {Tue, 14 May 2019 10:00:53 +0200},
  biburl       = {https://dblp.org/rec/conf/eurocrypt/KipnisPG99.bib},
  bibsource    = {dblp computer science bibliography, https://dblp.org}
}

@article{KedlayaU11,
  author       = {Kiran S. Kedlaya and
                  Christopher Umans},
  title        = {Fast Polynomial Factorization and Modular Composition},
  journal      = {{SIAM} J. Comput.},
  volume       = {40},
  number       = {6},
  pages        = {1767--1802},
  year         = {2011},
  url          = {https://doi.org/10.1137/08073408X},
  doi          = {10.1137/08073408X},
  timestamp    = {Sat, 27 May 2017 14:22:58 +0200},
  biburl       = {https://dblp.org/rec/journals/siamcomp/KedlayaU11.bib},
  bibsource    = {dblp computer science bibliography, https://dblp.org}
}

@article{HoevenL21a,
  author       = {Joris van der Hoeven and
                  Gr{\'{e}}goire Lecerf},
  title        = {Fast amortized multi-point evaluation},
  journal      = {J. Complex.},
  volume       = {67},
  pages        = {101574},
  year         = {2021},
  url          = {https://doi.org/10.1016/j.jco.2021.101574},
  doi          = {10.1016/J.JCO.2021.101574},
  timestamp    = {Tue, 21 Mar 2023 21:15:21 +0100},
  biburl       = {https://dblp.org/rec/journals/jc/HoevenL21a.bib},
  bibsource    = {dblp computer science bibliography, https://dblp.org}
}

@article{Yates37,
  title        = {The design and analysis of factorial experiments},
  author       = {Frank Yates},
  year         = {1937},
  publisher    = {Imperial Bureau of Soil Science}
}

@article{BorodinM74,
  author       = {Allan Borodin and
                  R. Moenck},
  title        = {Fast Modular Transforms},
  journal      = {J. Comput. Syst. Sci.},
  volume       = {8},
  number       = {3},
  pages        = {366--386},
  year         = {1974},
  url          = {https://doi.org/10.1016/S0022-0000(74)80029-2},
  doi          = {10.1016/S0022-0000(74)80029-2},
  timestamp    = {Tue, 16 Feb 2021 14:04:24 +0100},
  biburl       = {https://dblp.org/rec/journals/jcss/BorodinM74.bib},
  bibsource    = {dblp computer science bibliography, https://dblp.org}
}

@article{Werner80,
  author       = {Helmut Werner},
  title        = {Remarks on Newton type multivariate interpolation for subsets of grids},
  journal      = {Computing},
  volume       = {25},
  number       = {2},
  pages        = {181--191},
  year         = {1980},
  url          = {https://doi.org/10.1007/BF02259644},
  doi          = {10.1007/BF02259644},
  timestamp    = {Thu, 06 Aug 2020 13:50:43 +0200},
  biburl       = {https://dblp.org/rec/journals/computing/Werner80.bib},
  bibsource    = {dblp computer science bibliography, https://dblp.org}
}

@article{Boor92,
  author       = {Carl De Boor and
                  Amos Ron},
  title        = {Computational Aspects of Polynomial Interpolation in Several Variables},
  journal      = {Mathematics of Computation},
  volume       = {58},
  number       = {198},
  pages        = {705},
  year         = {1992},
  issn         = {0025--5718},
  url          = {http://dx.doi.org/10.2307/2153210},
  doi          = {10.2307/2153210},
}

@article{Bollinger93,
  author       = {Richard C. Bollinger},
  title        = {Extended Pascal Triangles},
  journal      = {Mathematics Magazine},
  volume       = {66},
  number       = {2},
  pages        = {87--94},
  year         = {1993},
  issn         = {1930-0980},
  url          = {http://dx.doi.org/10.1080/0025570X.1993.11996088},
  doi          = {10.1080/0025570x.1993.11996088},
}

@article{Pan94,
  author       = {Victor Y. Pan},
  title        = {Simple Multivariate Polynomial Multiplication},
  journal      = {J. Symb. Comput.},
  volume       = {18},
  number       = {3},
  pages        = {183--186},
  year         = {1994},
  url          = {https://doi.org/10.1006/jsco.1994.1042},
  doi          = {10.1006/JSCO.1994.1042},
  timestamp    = {Wed, 17 Feb 2021 08:57:11 +0100},
  biburl       = {https://dblp.org/rec/journals/jsc/Pan94.bib},
  bibsource    = {dblp computer science bibliography, https://dblp.org}
}

@article{Oruc2000,
  author       = {Halil Oru\c{c} and
                  George M. Phillips},
  title        = {Explicit factorization of the Vandermonde matrix},
  journal      = {Linear Algebra and its Applications},
  volume       = {315},
  number       = {1--3},
  pages        = {113--123},
  year         = {2000},
  url          = {http://dx.doi.org/10.1016/S0024-3795(00)00124-5},
  doi          = {10.1016/s0024-3795(00)00124-5},
}

@article{Sauer04,
  author       = {Tomas Sauer},
  title        = {Lagrange interpolation on subgrids of tensor product grids},
  journal      = {Math. Comput.},
  volume       = {73},
  number       = {245},
  pages        = {181--190},
  year         = {2004},
  url          = {https://doi.org/10.1090/S0025-5718-03-01557-6},
  doi          = {10.1090/S0025-5718-03-01557-6},
  timestamp    = {Wed, 17 Feb 2021 09:01:20 +0100},
  biburl       = {https://dblp.org/rec/journals/moc/Sauer04.bib},
  bibsource    = {dblp computer science bibliography, https://dblp.org}
}

@article{BjorklundHKK10,
  author       = {Andreas Bj{\"{o}}rklund and
                  Thore Husfeldt and
                  Petteri Kaski and
                  Mikko Koivisto},
  title        = {Trimmed Moebius Inversion and Graphs of Bounded Degree},
  journal      = {Theory Comput. Syst.},
  volume       = {47},
  number       = {3},
  pages        = {637--654},
  year         = {2010},
  url          = {https://doi.org/10.1007/s00224-009-9185-7},
  doi          = {10.1007/S00224-009-9185-7},
  timestamp    = {Wed, 14 Jun 2017 20:30:34 +0200},
  biburl       = {https://dblp.org/rec/journals/mst/BjorklundHKK10.bib},
  bibsource    = {dblp computer science bibliography, https://dblp.org}
}

@article{HoevenS13,
  author       = {Joris van der Hoeven and
                  {\'{E}}ric Schost},
  title        = {Multi-point evaluation in higher dimensions},
  journal      = {Appl. Algebra Eng. Commun. Comput.},
  volume       = {24},
  number       = {1},
  pages        = {37--52},
  year         = {2013},
  url          = {https://doi.org/10.1007/s00200-012-0179-3},
  doi          = {10.1007/S00200-012-0179-3},
  timestamp    = {Thu, 18 May 2017 09:50:54 +0200},
  biburl       = {https://dblp.org/rec/journals/aaecc/HoevenS13.bib},
  bibsource    = {dblp computer science bibliography, https://dblp.org}
}

@article{ChkifaCS14,
  author       = {Abdellah Chkifa and
                  Albert Cohen and
                  Christoph Schwab},
  title        = {High-Dimensional Adaptive Sparse Polynomial Interpolation and Applications
                  to Parametric PDEs},
  journal      = {Found. Comput. Math.},
  volume       = {14},
  number       = {4},
  pages        = {601--633},
  year         = {2014},
  url          = {https://doi.org/10.1007/s10208-013-9154-z},
  doi          = {10.1007/S10208-013-9154-Z},
  timestamp    = {Mon, 05 Feb 2024 20:21:27 +0100},
  biburl       = {https://dblp.org/rec/journals/focm/ChkifaCS14.bib},
  bibsource    = {dblp computer science bibliography, https://dblp.org}
}

@article{DynF14,
  author       = {Nira Dyn and
                  Michael S. Floater},
  title        = {Multivariate polynomial interpolation on lower sets},
  journal      = {J. Approx. Theory},
  volume       = {177},
  pages        = {34--42},
  year         = {2014},
  url          = {https://doi.org/10.1016/j.jat.2013.09.008},
  doi          = {10.1016/J.JAT.2013.09.008},
  timestamp    = {Tue, 16 Feb 2021 08:52:46 +0100},
  biburl       = {https://dblp.org/rec/journals/jat/DynF14.bib},
  bibsource    = {dblp computer science bibliography, https://dblp.org}
}

@book{BiniP94,
  author       = {Dario Bini and
                  Victor Y. Pan},
  title        = {Polynomial and matrix computations, 1st Edition},
  series       = {Progress in theoretical computer science},
  volume       = {12},
  publisher    = {Birkh{\"{a}}user},
  year         = {1994},
  url          = {https://www.worldcat.org/oclc/312012822},
  isbn         = {3764337869},
  timestamp    = {Fri, 17 Jul 2020 16:12:44 +0200},
  biburl       = {https://dblp.org/rec/books/lib/BiniP94.bib},
  bibsource    = {dblp computer science bibliography, https://dblp.org}
}

@book{vonzurGathenG13,
  author    = {Joachim von zur Gathen and Jürgen Gerhard},
  title     = {Modern Computer Algebra},
  publisher = {Cambridge University Press},
  edition   = {3rd},
  year      = {2013},
  doi       = {10.1017/CBO9781139856065},
}

\end{document}